\documentclass{amsart}

\vfuzz2pt 
\hfuzz2pt 
\newtheorem{thm}{Theorem}[section]

\newtheorem{lem}[thm]{Lemma}

\theoremstyle{definition}

\theoremstyle{remark}
\newtheorem{rem}[thm]{Remark}
\numberwithin{equation}{section}

\newcommand{\Real}{\mathbb R}

\newcommand{\To}{\longrightarrow}

\begin{document}

\title[The one-dimensional Schr\"{o}dinger-Newton equations]{The one-dimensional Schr\"{o}dinger-Newton equations}%
\author{Philippe Choquard* and Joachim Stubbe**}%
\address{*Institut for Theoretical Physics, EPFL, CH-1015 Lausanne, Switzerland}
\address{**EPFL, IMB-FSB, Station 8, CH-1015 Lausanne, Switzerland}%
\email{Philippe.Choquard@epfl.ch, Joachim.Stubbe@epfl.ch}%

\thanks{We are indebted to Marc Vuffray for providing us his numerical studies of the model.}
\subjclass{} \keywords{Schr\"{o}dinger-Newton equations, nonlinear
Schr\"{o}dinger equation, ground state, rearrangement inequality}

\date{22th june 2007}%
\begin{abstract}
We prove an existence and uniqueness result for ground states of
one-dimensional Schr\"{o}dinger-Newton equations.
\end{abstract}
\maketitle
\section{Introduction}
We consider the one-dimensional Schr\"{o}dinger-Newton system
\begin{equation}\label{SN}
    iu_t+u_{xx}-\gamma Vu=0,\quad V_{xx}=|u|^2
\end{equation}
which is equivalent to the nonlinear Schr\"{o}dinger equation
\begin{equation}\label{NLS}
    iu_t+u_{xx}-\frac{\gamma}{2}(|x|*|u|^2)u=0
\end{equation}
with nonlocal nonlinear potential
\begin{equation*}
   V(x)=\frac1{2} (|x|*|u|^2)(x,t)=\frac1{2}\int_{\Real}|x-y||u(t,y)|^2\;dy.
\end{equation*}
We are interested in the existence of nonlinear bound states of
the form
\begin{equation}\label{Bound states}
    u(t,x)=\phi_{\omega}(x)e^{-i\omega t}.
\end{equation}
The Schr\"{o}dinger-Newton system in three space dimensions
\begin{equation}\label{SN-3}
    iu_t+\Delta u-\gamma Vu=0,\quad \Delta V=|u|^2
\end{equation}
has a long standing history. With $\gamma$ designating appropriate
positive coupling constants it appeared first in 1954, then in
1976 and lastly in 1996 for describing the quantum mechanics of a
Polaron at rest by S. J. Pekar ~\cite{P54}, of an electron trapped
in its own hole by the first author ~\cite{L77} and of
selfgravitating matter by R. Penrose ~\cite{P96}. The
two-dimensional model is studied numerically in ~\cite{HMT2003}.
For the bound state problem there are rigorous results only for
the three dimensional model. In ~\cite{L77} the existence of a
unique ground state of the form \eqref{Bound states} is shown by
solving an appropriate minimization problem. This ground state
solution $\phi_{\omega}(x), \omega<0$ is a positive spherically
symmetric strictly decreasing function. In ~\cite{L80} the
existence of infinitely many distinct spherically symmetric
solutions is proven and a proof for the existence of anisotropic
bound states is claimed. So far, there are no results for the
one-dimensional model except for its semiclassical approximation
~\cite{CW2005}. One mathematical difficulty of the one-dimensional
problem is that the Coulomb potential does not define a positive
definite quadratic form (see below).

From numerical investigations of the problem we conjecture that in
the attractive case $\gamma>0$ equation \eqref{NLS} admits for
each $\omega>0$ infinitely many nonlinear bound states of the form
\eqref{Bound states} which means that subject to a normalization
condition $\int_{\Real}|u(t,x)|^2\;dx=N$ the model exhibits an
infinite discrete energy spectrum. In the present letter, we are
interested in the ground states of the model
\begin{equation}\label{Ground states}
    u(t,x)=\phi_{\omega}(x)e^{-i\omega t},\quad
    \phi_{\omega}(x)>0.
\end{equation}
We prove for any $\omega>0$ the existence of an unique spherically
symmetric ground state by solving an appropriate minimization
problem. We also prove the existence of an antisymmetric solution
by solving the same minimization problem restricted to the class
of antisymmetric functions.

\section{Mathematical Framework}

\subsection{Functional Setting} The natural function space $X$ for the quasi-stationary problem is given by
\begin{equation}\label{X}
    X=\{u:\mathbb{R}\to \mathbb{C}:\;\int_{\Real}|u_x|^2+|u|^2+|x||u|^2\;dx<\infty\}.
\end{equation}
Indeed, for each $u\in X$ the energy $E$ and the particle number
(or charge) $N$ associated to \eqref{NLS} given by
\begin{equation}\label{energy}
\begin{split}
    E(u)&=\int_{\Real}|u_x(x)|^2\;dx+\frac{\gamma}{4}\int_{\Real}\int_{\Real}|x-y||u(x)|^2|u(y)|^2\;dxdy\\
    &=T(u)+\frac{\gamma}{2}V(u)\\
    \end{split}
\end{equation}
and
\begin{equation}\label{charge}
    N(u)=\int_{\Real}|u(x)|^2\;dx,
\end{equation}
respectively, are well-defined quantities. In particular, the
energy functional $E:X\To\Real_0^{+}$ is of class $C^1$.

The space $X$ is a Hilbert space and by Rellich's criterion (see,
e.g. theorem XIII.65 of ~\cite{RS4}) the embedding
$X\hookrightarrow L^2$ is compact.
\subsection{Scaling properties} If $\phi_{\omega}(x)$ is a
solution of the stationary equation
\begin{equation}\label{sNLS-omega}
    -\phi_{\omega}''(x)+\frac{\gamma}{2}\bigg(\int_{\Real}|x-y||\phi_{\omega}(y)|^2\;dy\bigg)\;\phi_{\omega}(x)=\omega\phi_{\omega}(x),
\end{equation}
then $\phi_{1}(x)=\omega^{-1}\phi_{\omega}({x}/{\omega^{1/2}})$
solves
\begin{equation}\label{sNLS-1}
    -\phi_{1}''(x)+\frac{\gamma}{2}\bigg(\int_{\Real}|x-y||\phi_{1}(y)|^2\;dy\bigg)\;\phi_{1}(x)=\phi_{1}(x)
\end{equation}
and
\begin{equation}\label{scaling}
    E(\phi_{\omega})=\omega^{5/2}E(\phi_{1}),\quad N(\phi_{\omega})=\omega^{3/2}N(\phi_{1}).
\end{equation}
In addition, by the virial theorem
\begin{equation}\label{virial}
    4\omega N(\phi_{\omega})=20\,T(\phi_{\omega})=5\gamma V(\phi_{\omega}).
\end{equation}
\section{Ground states}
\subsection{Existence of ground states} We consider the following
minimization problem:
\begin{equation}\label{mini0}
    e_0(\lambda)=\inf \{E(u),u\in X,N(u)=\lambda\}.
\end{equation}
We note that the functional $u\to E(u)$ is not convex since the
quadratic form $f\to
\int_{\Real}\int_{\Real}|x-y|f(x)\bar{f}(y)\;dxdy$ is not positive
so that standard convex minimization does not apply. To see this
choose, for example, $f(x)=\xi_{[0,1]}(x)-\xi_{[1,2](x)}$ where
$\xi_{[a,b]}(x)$ denotes the characteristic function of the
interval $[a,b]$. For finite discrete systems it has been shown
that the associated matrix has only one positive eigenvalue
~\cite{K70}), which was computed in ~\cite{Ch75}).
\begin{thm} For any $\lambda>0$ there is a spherically symmetric
decreasing $u_{\lambda}\in X$ such that
$e_0(\lambda)=E(u_{\lambda})$ and $N(u_{\lambda})=\lambda$.
\end{thm}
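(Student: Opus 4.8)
The plan is to run the direct method of the calculus of variations, using the compact embedding $X\hookrightarrow L^2$ together with symmetric-decreasing rearrangement to produce a minimizer that is automatically spherically symmetric and decreasing. First I would verify that the problem is well-posed: since $|x-y|\ge 0$ and $\gamma>0$, we have $E(u)\ge T(u)\ge 0$, so $e_0(\lambda)\ge 0$ and the infimum is finite. A scaling check — or the explicit scaling relations $N(\phi_\omega)=\omega^{3/2}N(\phi_1)$, $E(\phi_\omega)=\omega^{5/2}E(\phi_1)$ from \eqref{scaling} — shows the constraint set is nonempty for every $\lambda>0$ and that $e_0(\lambda)$ behaves like $\lambda^{5/3}$ up to a constant, in particular $e_0(\lambda)<\infty$.

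Next I would take a minimizing sequence $(u_n)\subset X$ with $N(u_n)=\lambda$ and $E(u_n)\to e_0(\lambda)$. The key normalization step is to replace each $u_n$ by its symmetric-decreasing rearrangement $u_n^*$. Here $N(u_n^*)=N(u_n)=\lambda$; the Dirichlet term does not increase by the Pólya–Szegő inequality, $T(u_n^*)\le T(u_n)$; and the potential term does not increase because $|x-y|$ is, up to sign, the "right" kernel for Riesz-type rearrangement — more precisely, writing $|x-y| = C - (C-|x-y|)$ on the (bounded) supports involved, or better, using that $-|x-y|$ increases under rearrangement in the sense of the Riesz rearrangement inequality applied to the decreasing kernel $(C-|x-y|)_+$, one gets $V(u_n^*)\le V(u_n)$; and finally $\int |x||u_n^*|^2\,dx \le \int|x||u_n|^2\,dx$ since $|x|$ is itself symmetric increasing. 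Hence $(u_n^*)$ is again a minimizing sequence, now consisting of spherically symmetric decreasing functions, and it is bounded in $X$ because $E(u_n^*)$ and $N(u_n^*)$ are bounded and $E\ge T$ controls the $H^1$-part while the constraint plus $E$ control $\int|x||u|^2$ (this last point needs the interpolation/Hölder estimate showing $\int\int|x-y||u|^2|u|^2 \gtrsim \big(\int|x||u|^2\big)\cdot N - N^2\cdot(\text{something})$, or simply $\int|x||u|^2\le$ a constant times $\sqrt{V(u)N(u)} + N(u)\cdot$const via the triangle inequality $|x|\le|x-y|+|y|$; one cleanly bounds $\int|x||u|^2$ in terms of $V(u)$ and $N(u)$).

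Then I would extract a subsequence with $u_n^*\rightharpoonup u_\lambda$ weakly in $X$ and, by the compact embedding $X\hookrightarrow L^2$, strongly in $L^2$; so $N(u_\lambda)=\lambda$ and the constraint is preserved (this is exactly where one-dimensionality via the confining weight $|x|$ saves us — there is no loss of mass at infinity, no need for concentration-compactness). The limit $u_\lambda$ is still spherically symmetric and decreasing, being an a.e. limit of such functions. Lower semicontinuity finishes it: $T(u_\lambda)\le\liminf T(u_n^*)$ by weak lower semicontinuity of the Dirichlet norm, $\int|x||u_\lambda|^2\le\liminf\int|x||u_n^*|^2$ by Fatou, and for the potential term I would use the $L^2$-convergence together with the $X$-bound: since $|x-y||u_n^*(x)|^2|u_n^*(y)|^2$ converges a.e. and is dominated using the uniform $\int|x||u|^2$ bound (split $|x-y|\le|x|+|y|$), one gets $V(u_n^*)\to V(u_\lambda)$, in fact continuity rather than just lower semicontinuity. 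Therefore $E(u_\lambda)\le\liminf E(u_n^*)=e_0(\lambda)$, and since $u_\lambda$ is admissible, $E(u_\lambda)=e_0(\lambda)$.

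The main obstacle is the potential term $V$, on two counts: it is not a positive quadratic form (so coercivity of $E$ cannot come from $V$ and must be squeezed out of $T$ plus the constraint plus the weight $\int|x||u|^2$), and one must check that symmetric-decreasing rearrangement genuinely decreases $\frac14\iint|x-y||u(x)|^2|u(y)|^2$. The latter is the crux: $|x-y|$ is increasing in $|x-y|$, so it is the negative of a symmetric-decreasing kernel, and a naive application of the Riesz rearrangement inequality goes the wrong way. The fix is to work on a large ball containing the (a priori bounded, after a cutoff or truncation argument) relevant mass and write $|x-y| = 2R - (2R-|x-y|)$ with $(2R-|x-y|)_+$ symmetric decreasing, apply Riesz to that piece, and note the constant $2R\cdot N^2$ term is rearrangement-invariant; passing $R\to\infty$ with the help of the $\int|x||u|^2$ bound to control the tails gives $V(u^*)\le V(u)$ for all $u\in X$. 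This rearrangement inequality for the kernel $|x-y|$ is really the heart of the argument and the step I expect to spell out most carefully.
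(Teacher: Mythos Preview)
Your plan matches the paper's proof almost step for step: take a minimizing sequence, replace each term by its symmetric-decreasing rearrangement (using the $C-|x-y|$ trick plus Riesz to handle the increasing kernel, exactly as the paper does in its embedded lemma), extract a weak-$X$/strong-$L^2$ subsequential limit via the compact embedding, and conclude by lower semicontinuity.

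Two soft spots in your execution are worth tightening, and the paper handles both differently. First, your control of $\int|x|\,|u_n^*|^2$ via the triangle inequality $|x|\le|x-y|+|y|$ is circular: integrating against $|u(x)|^2|u(y)|^2$ gives $N\int|x|\,|u|^2 \le 2V(u) + N\int|y|\,|u|^2$, i.e.\ the vacuous $0\le 2V(u)$. The paper instead uses Jensen's inequality together with the fact that a symmetric density has its center of mass at the origin to obtain
\[
2V(u)\;\ge\;\int\Bigl|\int(x-y)\,|u(y)|^2\,dy\Bigr|\,|u(x)|^2\,dx \;=\; N\int|x|\,|u(x)|^2\,dx,
\]
which is the clean coercivity bound you need. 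Second, your claim that $V(u_n^*)\to V(u_\lambda)$ by dominated convergence lacks an $n$-independent majorant; the splitting $|x-y|\le|x|+|y|$ only produces a sequence of dominators with uniformly bounded integrals, which is not sufficient for DCT. Fatou alone already yields the lower semicontinuity you actually need, but the paper takes a more explicit route: for symmetric densities it rewrites $V(u)=\int|x|\,|u(x)|^2\,\eta(x)\,dx$ with $\eta(x)=\int_{-|x|}^{|x|}|u(y)|^2\,dy$, observes that $\eta_n\to\eta$ uniformly (from strong $L^2$ convergence), and splits $V(u_n^*)-V(u^*)$ into a term that tends to zero and a nonnegative weakly lower semicontinuous term.
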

\begin{proof} Let $(u_n)_n$ be a minimizing sequence for
$e(\lambda)$, that is $N(u_n)=\lambda$ and
$\underset{n\To\infty}{\lim} E(u_n)=e(\lambda)$. We also may
assume that $|E(u_n)|$ is uniformly bounded. Denoting $u^{*}$ the
spherically symmetric-decreasing rearrangement of $u$ we have (see
e.g. lemma 7.17 in ~\cite{LL01})
\begin{equation*}
    T(u^{*})\leq T(u), \quad N(u^{*})= N(u).
\end{equation*}
For the potential $V(u)$ we apply the following rearrangement
inequality:
\begin{lem}
Let $f,g$ be two nonnegative functions on $\Real$, vanishing at
infinity with spherically symmetric-decreasing rearrangement
$f*,g*$ , respectively. Let $v$ be a nonnegative spherically
symmetric increasing function. Then

\begin{equation}\label{Riesz2}
    \int_{\Real}\int_{\Real}f(x)v(x-y)g(y)\;dxdy\geq \int_{\Real}\int_{\Real}f^*(x)v(x-y)g^*(y)\;dxdy
\end{equation}
\end{lem}
\begin{proof}
If $v$ is bounded, $v\leq C$, then $(C-v)^*=C-v$ and by Riesz's
rearrangement inequality (lemma 3.6 in ~\cite{LL01}) we have
\begin{equation*}
    \int_{\Real}\int_{\Real}f(x)(C-v(x-y))g(y)\;dxdy\leq
    \int_{\Real}\int_{\Real}f^{*}(x)(C-v(x-y))g^{*}(y)\;dxdy.
\end{equation*}
Since
\begin{equation*}
    \int_{\Real}f(x)\;dx\int_{\Real}g(y)\;dy =
    \int_{\Real}f^{*}(x)\;dx\int_{\Real}g^{*}(y)\;dy
\end{equation*}
the claim follows. If $v$ is unbounded we define a truncation by
$v_n(x)=\sup{(v(x),n)}$ and apply the monotone convergence
theorem.
\end{proof}
By the preceding lemma we have
\begin{equation*}
    V(u^{*})\leq V(u)
\end{equation*}
since $|x|$ is an increasing spherically symmetric function.
Therefore we may suppose that $u_n=u^{*}_n$. We claim that
$u^{*}_n\in X$. Indeed, since $|x|$ is a convex function we have
\begin{equation*}
    V(u)\geq\frac{1}{2}\int_{\Real}\bigg|\int_{\Real}(x-y)|u(y)|^2\;dy \bigg||u(x)|^2\;dx
\end{equation*}
by Jensen's inequality and therefore
\begin{equation*}
    E(u^{*}_n)\geq
    T(u^{*}_n)+\lambda\frac{\gamma}{4}\int_{\Real}|x||u^{*}_n|^2\;dx
\end{equation*}
proving our claim. We may extract a subsequence which we denote
again by $(u^{*}_n)_n$ such that $u^{*}_n\to u^{*}$ weakly in $X$,
strongly in $L^2$ and a.e. where $u^{*}\in X$ is a nonnegative
spherically symmetric decreasing function. Note that $u^{*}\neq 0$
since $N(u^{*})=\lambda$. We want to show that $E(u^{*})\leq
\underset{n\To\infty}{\lim\inf}\;E(u^{*}_n)$. Since
\begin{equation*}
    T(u^{*})\leq
\underset{n\To\infty}{\lim\inf}\;T(u^{*}_n)
\end{equation*}
it remains to analyze the functional $V(u)$. First of all, we note
that for spherically symmetric densities $|u(x)|^2$ we have
\begin{equation*}
    V(u)=\int_{\Real}|x||u(x)|^2\bigg(\int_{-|x|}^{|x|}|u(y)|^2\;dy\bigg)\;dx.
\end{equation*}
Let
\begin{equation*}
    \eta(x)=\int_{-|x|}^{|x|}|u^{*}(y)|^2\;dy,\quad
    \eta_n(x)=\int_{-|x|}^{|x|}|u^{*}_{n}(y)|^2\;dy.
\end{equation*}
Then $\eta_n(x)\to\eta(x)$ uniformly since
\begin{equation*}
    ||\eta_n(x)-\eta(x)||_{\infty}\leq  ||u^{*}_n-u^{*}||_{2}\;
    ||u^{*}_n+u^{*}||_{2}\leq
    2\sqrt{\lambda}||u^{*}_n-u^{*}||_{2}.
\end{equation*}
Now
\begin{equation*}
\begin{split}
   &V(u^{*}_n)-V(u^{*})=\\
   &\int_{\Real}|x||u^{*}_n(x)|^2\big(\eta_n(x)-\eta(x)\big)\;dx+\int_{\Real}|x|\eta(x)\big(|u^{*}_n(x)|^2-|u^{*}(x)|^2\big)\;dx\\
   \end{split}
\end{equation*}
As $n\to\infty$ the first integral will tend to zero while the
second will remain nonnegative since the continuous functional
$\phi\to\int_{\Real}|x|\eta(x)|\phi(x)|^2\;dx$ is positive. Hence
\begin{equation*}
    V(u^{*})\leq
\underset{n\To\infty}{\lim\inf}\;V(u^{*}_n)
\end{equation*}
proving the theorem.
\end{proof}

\subsection{Uniqueness of ground states} As in ~\cite{L77} we need a
strict version of the rearrangement inequality for the potential
energy $V(u)$:
\begin{lem}
If $u\in X$ and $u^*(x)\notin \{e^{i\theta}u(x-a): \theta,
a\in\Real\}$, then we have the strict inequality:
\begin{equation}
    V(u)>V(u^{*})
\end{equation}
\end{lem}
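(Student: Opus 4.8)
The plan is to trace through the two ingredients of the non-strict proof above—Riesz's rearrangement inequality applied to $C-v$, and the mass-conservation identity—and identify precisely when equality can hold. Recall we wrote $V(u)$ in terms of the kernel $v(x)=|x|$, truncated it as $v_n=\min(v,n)$ (so $v_n$ is bounded, spherically symmetric increasing, and $C_n-v_n$ is spherically symmetric decreasing with $C_n=n$), and obtained
\[
    \int\!\!\int f(x)(C_n-v_n(x-y))g(y)\,dxdy \le \int\!\!\int f^*(x)(C_n-v_n(x-y))g^*(y)\,dxdy
\]
with $f=g=|u|^2$. The first step is to invoke the \emph{equality case} of Riesz's rearrangement inequality (see Lieb--Loss, the discussion following lemma 3.6, or Lieb's 1977 paper): for strictly symmetric-decreasing kernels, equality forces $f$ and $g$ to be translates of their symmetric rearrangements by a common vector. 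The obstacle is that $C_n-v_n$ is \emph{not} strictly decreasing—it is constant ($=0$) outside the ball of radius $n$—so the naive equality statement does not apply directly.

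To get around this I would argue as follows. Suppose $V(u)=V(u^*)$. Then equality must hold in the truncated inequality for \emph{every} $n$ (since each truncated functional is dominated by its rearranged version and the limits agree, by monotone convergence each term must already be an equality—one should check this monotonicity bookkeeping carefully). Fix $n$ large enough that the supports interact nontrivially. Because $|u|^2$ has finite first moment $\int |x||u|^2\,dx<\infty$ and $N(u)=\lambda>0$, the density cannot be supported at infinity; in particular, after translating so that equality in the symmetric-decreasing rearrangement holds, the relevant mass sits inside a fixed ball, and on that ball $C_n-v_n$ agrees with the strictly decreasing function $C_n-v$. One then applies the strict Riesz inequality on this ball: since $C_n - v(x-y)$ restricted to the region where both densities live is strictly symmetric-decreasing in $x-y$, equality forces $|u|^2$ to coincide a.e. with a translate $|u^*(\cdot - a)|^2$. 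Writing $u = e^{i\theta(x)}|u|$, the kinetic inequality $T(u^*)\le T(u)$ would separately need examining if one wanted to rule out phases, but the hypothesis only asks that $u^*\notin\{e^{i\theta}u(\cdot-a)\}$, so it suffices to conclude $|u(x)| = |u^*(x-a)|$ a.e. for some $a$—which together with the hypothesis gives the contradiction, hence the strict inequality $V(u)>V(u^*)$.

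Concretely the key steps, in order: (1) reduce to the case $V(u)=V(u^*)$ and push this equality down to every truncation level $v_n$, using that the non-strict inequality plus convergence of the two sides forces termwise equality; (2) use finiteness of the first moment $\int|x||u|^2\,dx$ (guaranteed since $u\in X$) to confine the effective support to a fixed ball after an optimal translation; (3) on that ball apply the \emph{strict} form of Riesz's rearrangement inequality—the kernel $C_n-v$ being genuinely strictly decreasing there—to force $|u(\cdot)|^2 = |u^*(\cdot - a)|^2$ a.e.; (4) observe this contradicts the hypothesis $u^*\notin\{e^{i\theta}u(\cdot-a):\theta,a\in\Real\}$. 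The main obstacle is step (3) combined with step (1): the truncation destroys strict monotonicity globally, and one must be careful that the localization argument legitimately recovers a regime where the strict inequality applies, and that no mass escapes to the flat region of the truncated kernel; the finite-first-moment condition built into $X$ is exactly what makes this work, mirroring the role of the decay hypothesis in Lieb's three-dimensional treatment in \cite{L77}.
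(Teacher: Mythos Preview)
Your localization strategy in steps (2)--(3) has a genuine gap. A function $u\in X$ need not have compact support---a Gaussian, for instance, lies in $X$---so the claim that ``the relevant mass sits inside a fixed ball'' is simply false in general, and there is no region on which you can legitimately replace the truncated kernel $C_n-v_n=(n-|x|)_+$ by the strictly decreasing $C_n-|x|$. The equality case of Riesz's inequality for a kernel that is merely nonincreasing (vanishing outside a ball) does \emph{not} force $|u|^2$ to be a translate of its rearrangement: mass can be redistributed at separations larger than $n$ without the kernel detecting it. Invoking equality for \emph{all} $n$ simultaneously might eventually recover the conclusion, but you have not indicated how, and it is not automatic. (Incidentally, your step~(1) can be justified, though not by the monotone-convergence bookkeeping you sketch: split $|x|=v_n+(|x|-v_n)$ into two symmetric increasing pieces, apply the non-strict lemma to each, and observe that equality in the sum forces equality in both summands.)

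The paper sidesteps the whole difficulty by a different decomposition. One writes
\[
   |x| \;=\; -\,g(x) \;+\; \bigl(|x|+g(x)\bigr), \qquad g(x)=\frac{1}{1+|x|},
\]
so that $g$ is bounded and \emph{globally strictly} symmetric-decreasing, while $|x|+g(x)$ is symmetric increasing. The strict equality case of Riesz's inequality from \cite{L77} applies directly to the $-g$ part and already yields a strict inequality in the right direction; the non-strict lemma handles the increasing remainder. Adding the two gives $V(u)>V(u^{*})$ immediately. The idea you are missing is that, rather than trying to extract strictness from a truncation that is flat at infinity, one should subtract off a genuinely strictly decreasing kernel from $|x|$ and apply strict Riesz to that piece.
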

\begin{proof}
We write
$|x|=-\frac1{1+|x|}+\frac{|x|^2+|x|+1}{1+|x|}=-g(x)+(|x|+g(x))$
where $g(x)$ is a spherically symmetric strictly decreasing
function and $g(x)+|x|$ is increasing. Then, from the strict
inequality for strictly decreasing functions (see ~\cite{L77}) we
have $V(u)>V(u^{*})$.
\end{proof}
After suitable rescaling the solution of the minimization problem
\eqref{mini0} satisfies the stationary equation \eqref{sNLS-1}
which is equivalent to the system of ordinary differential
equations
\begin{equation}\label{statSN-1}
    -\phi''+\gamma V\phi=\phi,\quad V''=\phi^2.
\end{equation}
Obviously, $\phi(x)>0$ for all $x$ and after another rescaling we
may assume that the pair $(\phi,V)$ satisfies the initial
conditions $\phi(0)>0, \phi'(0)=0, V(0)=V'(0)=0$. System
\eqref{statSN-1} is Hamiltonian with energy function given by
\begin{equation}\label{statSN-Hamiltonian}
    \mathcal{E}(\phi,\phi',V,V')=\phi'^2+\phi^2+\frac{\gamma}{2}V'^2-\gamma V\phi^2
\end{equation}
and $\mathcal{E}=\phi^2(0)$ for any symmetric solution.
\begin{thm}
The system \eqref{statSN-1} admits a unique symmetric solution
$(\phi,V)$ such that $\phi>0$ and $\phi\to 0$ as $|x|\to\infty$.
\end{thm}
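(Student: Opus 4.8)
Since the existence theorem above, combined with the two rescalings, already produces a symmetric solution of \eqref{statSN-1} with $\phi>0$ and $\phi\to 0$, only uniqueness is at issue, and the plan is to derive it from a Wronskian comparison. Observe first that, with the normalization $V(0)=V'(0)=0$, the equation $V''=\phi^{2}$ forces $V(x)=\int_{0}^{x}(x-s)\phi(s)^{2}\,ds$, so $V$ is even, nonnegative, convex and nondecreasing on $[0,\infty)$; since $\phi>0$ we have $\int_{0}^{x}\phi^{2}\,ds>0$ for $x>0$ and hence $V(x)\to+\infty$. Thus a symmetric solution is determined by the single number $a:=\phi(0)>0$: the first order system $(\phi,\phi',V,V')'=(\phi',(\gamma V-1)\phi,V',\phi^{2})$ has polynomial, hence locally Lipschitz, right-hand side, so two symmetric solutions with the same $a$ coincide (an even $\phi$ has $\phi'(0)=0$). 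It therefore suffices to show that at most one $a>0$ yields a solution with $\phi>0$ and $\phi\to 0$.

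Next I would record the decay needed to control the boundary terms at infinity. Because $\gamma V(x)-1>0$ for $x$ large, the equation $\phi''=(\gamma V-1)\phi$ makes $\phi$ convex for large $x$; a positive convex function tending to $0$ at $+\infty$ is eventually decreasing with derivative tending to $0$, so any admissible solution satisfies $\phi(x)\to 0$ and $\phi'(x)\to 0$ as $x\to\infty$ (in fact one gets Airy-type super-exponential decay of $\phi$ and $\phi'$, but this refinement is not needed).

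Now suppose $(\phi_{1},V_{1})$ and $(\phi_{2},V_{2})$ are two such solutions with $a_{1}:=\phi_{1}(0)<\phi_{2}(0)=:a_{2}$, and set $W:=\phi_{2}'\phi_{1}-\phi_{1}'\phi_{2}$. From the two equations, $W'=\phi_{2}''\phi_{1}-\phi_{1}''\phi_{2}=\gamma\,(V_{2}-V_{1})\,\phi_{1}\phi_{2}$, while the boundary data give $W(0)=0$ and $W(x)\to 0$ as $x\to\infty$. By continuity $\phi_{2}>\phi_{1}$ on some interval $(0,\delta)$. If $\phi_{2}>\phi_{1}$ on all of $(0,\infty)$, then $V_{2}-V_{1}=\int_{0}^{x}(x-s)(\phi_{2}^{2}-\phi_{1}^{2})\,ds>0$ there, so $W'>0$ on $(0,\infty)$ and $W(R)\ge\int_{0}^{1}W'>0$ for every $R\ge 1$, contradicting $W(R)\to 0$. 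Otherwise let $x_{0}>0$ be the first point with $\phi_{1}(x_{0})=\phi_{2}(x_{0})=:c>0$; on $(0,x_{0})$ one still has $\phi_{2}>\phi_{1}>0$, hence $V_{2}>V_{1}$, $W'>0$ and thus $W(x_{0})=\int_{0}^{x_{0}}W'>0$, whereas $\phi_{2}-\phi_{1}$ is positive on $[0,x_{0})$ and vanishes at $x_{0}$, so $(\phi_{2}-\phi_{1})'(x_{0})\le 0$ and $W(x_{0})=c\,(\phi_{2}-\phi_{1})'(x_{0})\le 0$, a contradiction. Hence $a_{1}=a_{2}$, and uniqueness follows.

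The only genuinely delicate point is the asymptotic analysis of the previous paragraph, needed to make $W$ vanish at infinity; it rests on the observation that the nonlocal potential $V$ is unbounded---a direct consequence of $\phi>0$---so that \eqref{statSN-1} is, for large $|x|$, a linear equation with a confining potential, which dictates the decay of $\phi$ and $\phi'$. Everything else is an elementary phase-plane comparison, consistent with the shooting picture in which $\phi(0)$ too small produces sign-changing solutions, $\phi(0)$ too large produces solutions diverging to $+\infty$, and the decaying solution is the unique separatrix; alternatively, uniqueness can be obtained from the strict rearrangement inequality established above by showing that every admissible solution, after undoing the rescalings, is a minimizer for \eqref{mini0}.
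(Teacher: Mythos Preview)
Your proof is correct and follows essentially the same Wronskian comparison as the paper's own argument: both set $W=\phi_2'\phi_1-\phi_1'\phi_2$, compute $W'=\gamma(V_2-V_1)\phi_1\phi_2$, and split into the two cases ``$\phi_2>\phi_1$ everywhere'' versus ``first crossing at $x_0$'' to reach a contradiction. You add some justifications that the paper leaves implicit---why the solution is determined by $\phi(0)$ alone, and why $\phi'(x)\to 0$ so that $W$ vanishes at infinity---and your handling of the crossing case (using $W(x_0)>0$ from integration against $W(x_0)=c\,(\phi_2-\phi_1)'(x_0)\le 0$) is slightly cleaner than the paper's, which asserts the strict inequality $\delta'(\bar x)<0$ without comment.
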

\begin{proof} Suppose there are two distinct solutions $(u_1,V_1)$,
$(u_2,V_2)$ having the required properties. We may suppose
$u_2(0)>u_1(0)$. For $x\geq 0$ we consider the Wronskian
\begin{equation*}
    w(x)=u_2'(x)u_1(x)-u_1'(x)u_2(x).
\end{equation*}
Note that $w(0)=0$ and $w(x)\to 0$ as $x\to\infty$. It satisfies
the differential equation
\begin{equation*}
    w'=\gamma (V_2-V_1)u_1u_2.
\end{equation*}
Suppose $u_2(x)>u_1(x)$ for all $x\geq 0$. Then $V_2(x)>V_1(x)$
for all $x\geq 0$ since $(V_2-V_1)''=u_2^2-u_1^2>0$ and hence
$w'(x)>0$ for all $x> 0$ which is impossible. Hence there exists
$\bar{x}>0$ such that $\delta(x)=u_2(x)-u_1(x)>0$ for
$x\in[0,\bar{x}[$, $\delta(\bar{x})=0$ and $\delta'(\bar{x})<0$.
However, then $w(\bar{x})=\delta'(\bar{x})u_1(\bar{x})<0$, but
$w'(x)>0$ for all $x< \bar{x}$ which is again impossible.
\end{proof}
\subsection{Existence of antisymmetric ground states}
We consider the subspace $X^{as}$ of $X$ consisting of
antisymmetric functions, i.e. of functions $u$ such that
$u(-x)=-u(x)$. Repeating the arguments of the proof of theorem 2.1
we prove the existence of a solution of the minimization problem
\begin{equation}\label{mini0}
    e_1(\lambda)=\inf \{E(u),u\in X^{as},N(u)=\lambda\}
\end{equation}
which we conjecture to be the first excited state.
\begin{thm} For any $\lambda>0$ there is an antisymmetric $v_{\lambda}\in X$, positive for $x>0$ such that $e_1(\lambda)=E(v_{\lambda})$ and
$N(v_{\lambda})=\lambda$.
\end{thm}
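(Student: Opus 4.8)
The plan is to mirror the proof of Theorem 2.1 (the existence theorem for spherically symmetric ground states), but working inside the closed subspace $X^{as}$ of antisymmetric functions. First I would take a minimizing sequence $(v_n)_n\subset X^{as}$ for $e_1(\lambda)$, with $N(v_n)=\lambda$ and $E(v_n)\to e_1(\lambda)$, and as before assume $|E(v_n)|$ is uniformly bounded. The key structural observation is that an antisymmetric function $v$ on $\Real$ is determined by its restriction to $[0,\infty)$, and that the natural rearrangement preserving antisymmetry is the one that replaces $v|_{[0,\infty)}$ by its decreasing rearrangement on $[0,\infty)$ and extends it oddly (equivalently, one rearranges $|v|^2$ on the half-line). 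Under this operation $T$ does not increase and $N$ is preserved, by the one-dimensional (half-line) version of the rearrangement inequalities already invoked; I would also need that $V(v)$ does not increase, which should follow from the same kind of rearrangement argument as in the Lemma after Theorem 2.1, applied to the densities on the half-line together with the reflection symmetry. So without loss of generality I may assume each $v_n$ is, on $[0,\infty)$, nonnegative and decreasing (hence positive for $x>0$), and antisymmetric overall.

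Next I would establish the a priori bound that keeps the sequence in $X$: exactly as in Theorem 2.1, convexity of $|x|$ and Jensen's inequality give $V(v)\geq \tfrac12\int_{\Real}\big|\int_{\Real}(x-y)|v(y)|^2\,dy\big|\,|v(x)|^2\,dx$, and combined with $N(v_n)=\lambda$ this yields $E(v_n)\geq T(v_n)+\lambda\tfrac{\gamma}{4}\int_{\Real}|x||v_n|^2\,dx$, so $\|v_n\|_X$ is bounded. By Rellich's criterion (compact embedding $X\hookrightarrow L^2$) I extract a subsequence with $v_n\to v$ weakly in $X$, strongly in $L^2$, and a.e.; the limit $v$ is antisymmetric (antisymmetry is preserved under $L^2$ limits), lies in $X$, has $N(v)=\lambda$ (so $v\neq 0$), and is nonnegative and nonincreasing on $[0,\infty)$. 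Weak lower semicontinuity gives $T(v)\leq\liminf T(v_n)$.

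The remaining point is lower semicontinuity of $V$ along this sequence. Here I would adapt the computation at the end of the proof of Theorem 2.1: for antisymmetric $v$ with density symmetric about the origin one again has $V(v)=\int_{\Real}|x||v(x)|^2\big(\int_{-|x|}^{|x|}|v(y)|^2\,dy\big)\,dx$, and setting $\eta_n(x)=\int_{-|x|}^{|x|}|v_n(y)|^2\,dy$, $\eta(x)=\int_{-|x|}^{|x|}|v(y)|^2\,dy$, the estimate $\|\eta_n-\eta\|_\infty\leq 2\sqrt{\lambda}\,\|v_n-v\|_2\to 0$ holds verbatim. Splitting $V(v_n)-V(v)$ into the two integrals exactly as before, the first tends to $0$ and the second stays nonnegative because $\phi\mapsto\int_{\Real}|x|\eta(x)|\phi(x)|^2\,dx$ is a positive functional; hence $V(v)\leq\liminf V(v_n)$ and therefore $E(v)\leq\liminf E(v_n)=e_1(\lambda)$, forcing $E(v)=e_1(\lambda)$. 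Setting $v_\lambda=v$ finishes the proof, with positivity for $x>0$ coming from the fact that $v_\lambda$ is decreasing and nonzero on the half-line (strict positivity on $(0,\infty)$ can be read off the ODE \eqref{statSN-1} after rescaling, since a nonnegative solution vanishing at an interior point together with its derivative would vanish identically). The main obstacle I anticipate is the rearrangement step for $V$ in the antisymmetric class: one must be careful that the half-line decreasing rearrangement, extended oddly, genuinely does not increase $V(v)$, since $V$ couples values at $x$ and $-x$ through $|x-y|$; I expect this is handled by writing the interaction in terms of the symmetric density $|v|^2$ and invoking the half-line analogue of inequality \eqref{Riesz2}, but it is the place where the argument is not a literal repetition of Theorem 2.1.
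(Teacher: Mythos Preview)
Your overall plan is sound and close to the paper's, but the rearrangement step you propose is genuinely problematic, and---fortunately---unnecessary.

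\textbf{The gap.} Your ``antisymmetric rearrangement'' (take the decreasing rearrangement of $v|_{[0,\infty)}$ on the half-line and extend oddly) does \emph{not} map $X^{as}$ into itself. If $v\in X^{as}$ is nonzero, then $v(0)=0$ but the half-line decreasing rearrangement $v^{\sharp}$ satisfies $v^{\sharp}(0)=\operatorname*{ess\,sup}|v|>0$; the odd extension therefore has a jump at the origin and is not in $H^{1}(\Real)$, so $T=\infty$. Thus ``$T$ does not increase'' fails, and with it the Polya--Szeg\H{o} part of your argument. You correctly flagged this as the delicate point; the specific fix you suggest does not work.

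\textbf{Why no rearrangement is needed.} Every antisymmetric $v$ already has an even density $|v|^{2}$. Hence both ingredients you borrow from Theorem~3.1 hold \emph{automatically} for any $v\in X^{as}$: the Jensen bound $V(v)\geq\tfrac{\lambda}{2}\int_{\Real}|x|\,|v|^{2}$ (since $\int y\,|v(y)|^{2}\,dy=0$) and the identity $V(v)=\int_{\Real}|x|\,|v(x)|^{2}\eta(x)\,dx$. So you may take an arbitrary minimizing sequence in $X^{as}$, get the $X$-bound, extract a weak limit in $X^{as}$ (a closed subspace), and run your $\eta_n\to\eta$ lower-semicontinuity argument verbatim. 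For nonnegativity on $(0,\infty)$, simply replace the minimizer by the odd extension of its absolute value on the half-line; this preserves $T$, $V$, $N$ and the boundary value $0$ at the origin, and strict positivity then follows from the Euler--Lagrange ODE as you indicate.

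\textbf{Comparison with the paper.} The paper makes the same reduction explicit by restricting to $(0,\infty)$ with a Dirichlet condition and rewriting
\[
E(u)=2\!\int_{0}^{\infty}\!|u_x|^{2}+\frac{\gamma}{2}\!\int_{0}^{\infty}\!\!\int_{0}^{\infty}\!|x-y|\,|u|^{2}|u|^{2}
+\gamma\Bigl(\int_{0}^{\infty}\!|u|^{2}\Bigr)\!\int_{0}^{\infty}\!x\,|u|^{2},
\]
so that the last term already furnishes the $\int x|u|^{2}$ a~priori bound without Jensen; it then takes $u_n\ge 0$ on the half-line (absolute value, not a decreasing rearrangement) and declares that the remainder follows Theorem~3.1. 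Your route and the paper's are equivalent once you drop the rearrangement; the paper's half-line decomposition just makes the $X$-bound transparent.
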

\begin{proof} We may restrict the problem to the positive half-axis with Dirichlet boundary conditions. Then
\begin{equation*}
\begin{split}
    E(u)&=2\int_{0}^{\infty}|u_x(t,x)|^2\;dx+\frac{\gamma}{2}\int_{0}^{\infty}\int_{0}^{\infty}(|x-y|+|x+y|)|u(t,x)|^2|u(t,y)|^2\;dxdy\\
    &=2\int_{0}^{\infty}|u_x(t,x)|^2\;dx+\frac{\gamma}{2}\int_{0}^{\infty}\int_{0}^{\infty}|x-y||u(t,x)|^2|u(t,y)|^2\;dxdy\\
    &\qquad \qquad\qquad\qquad\qquad+\gamma\int_{0}^{\infty}|u(t,x)|^2\;dx\int_{0}^{\infty}|x||u(t,x)|^2\;dx.\\
    \end{split}
\end{equation*}
Let $(u_n)_n$ be a minimizing sequence for $e(\lambda)$, that is
$N(u_n)=\lambda$ and $\underset{n\To\infty}{\lim}
E(u_n)=e_1(\lambda)$.  We may suppose that the $u_n$ are
nonnegative on the positive half-axis. The rest of the proof
follows the same lines as the proof of theorem 3.1.
\end{proof}
\begin{rem}
As in theorem 3.3 we can show that the odd solution $\phi(x)$ of
\eqref{statSN-1} such that $\phi(x)>0$ for all $x>0$ which
corresponds to the initial conditions $\phi(0)=0, \phi'(0)>0,
V(0)=V'(0)=0$ is unique.
\end{rem}

\bibliographystyle{amsplain}

\begin{thebibliography}{}
\bibitem{P54} Pekar, S.I., Untersuchungen \"{u}ber die
Elektronentheorie der Kristalle, Akademie Verlag Berlin, 1954.
\bibitem{L77} Lieb, E.H., Existence and uniqueness of the minimizing solution of Choquard's nonlinear equation, Studies in Applied Mathematics 57, 93-105 (1977).
\bibitem{P96} Penrose, R., On gravity's role in quantum state reduction, Gen. Rel. Grav. 28, 581-600
(1996).
\bibitem{HMT2003} Harrison, R., Moroz, I. Tod, K.P., A numerical study of the Schr\"{o}dinger-Newton equation, Nonlinearity 16, 101-122 (2003).
\bibitem{L80} Lions, P.L., The Choquard equation and related questions, Nonlinear Analysis T.M.A. 4, 1063-1073 (1980).
\bibitem{L86} Lions, P.L., Solutions complexes d'\'{e}quations elliptiques semi-lin\'{e}aires dans $\mathbb{R}^n$, C.R. Acad. Sc. Paris t.302, S\'{e}rie 1 ,no. 19, 673-676 (1986).
\bibitem{CW2005} Choquard, Ph., Wagner, J., On a class of implicit solutions of the continuity and Euler's equatios for 1D systems with long range interactions, Physica D 40, 230-248 (2005).
\bibitem{RS4} Reed, M. and Simon, B., Methods of modern mathematical physics, Vol. IV, Analysis of operators, Academic Press 1978.
\bibitem{K70} Kac, M., Some probabilistic aspects of classical analysis, Am. Math. Mon. 77(6), 586-597 (1970).
\bibitem{Ch75} Choquard, Ph., On the statistical mechanics of One-dimensional Coulomb systems, Helv. Phys. Acta 48, 585-598 (1975).
\bibitem{LL01} Lieb, E.H. and Loss, M., Analysis, 2nd edition, Graduate Studies in Mathematics, vol. 14, AMS 2001.

\end{thebibliography}

\end{document}